\newtheorem{theorem}{Theorem}[section]
\newtheorem*{theorem A}{Theorem A}
\newtheorem*{theorem B}{N\"olker's Theorem}
\newtheorem{lemma}{Lemma}[section]
\theoremstyle{remark}
\theoremstyle{remark}
\begin{document}

\begin{frontmatter}
\title{Rederiving the Upper Bound for Halving Edges using Cardano's Formula}



\author[label1]{Pintu Chauhan}
\author[label1]{Manjish Pal}
\author[label1]{Napendra Solanki}

\address[label1]{\small Department of Computer Science,\\
NIT-Meghalaya,\\
Shillong, India

\vspace*{2.5ex} 
 {\normalfont pintu.chauhan93@gmail.com, manjishster@gmail.com, napendrasinghrajput@gmail.com}
}

\begin{abstract}
In this paper we rederive an old upper bound on the number of halving edges present in the halving
graph of an arbitrary set of $n$ points in 2-dimensions which are placed in general position. We provide a different analysis
of an identity discovered by Andrejak et al, to rederive this upper bound of $O(n^{4/3})$. In the original paper of Andrejak et al.
the proof is based on a naive analysis whereas in this paper we obtain the same upper bound by tightening the analysis thereby
opening a new door to derive these upper bounds using the identity. Our analysis is based on a result of Cardano for finding the
roots of a cubic equation. We believe that our technique has the potential to derive improved bounds on the number of halving edges.

\let\thefootnote\relax\footnotetext{Received: xx xxxxx 20xx,\quad
  Accepted: xx xxxxx 20xx.\\[3ex]
  }
  
\end{abstract}

\begin{keyword}
$k$-sets \sep Halving edges \sep Graphs 

Mathematics Subject Classification : 68R05, 68R10, 52C10

\end{keyword}

\end{frontmatter}

\section{Introduction}
Combinatorial Geometry is a field of discrete mathematics which deals with counting certain discrete 
structures in geometric spaces. There are several questions in this field which are very simple to state but finding 
the true solution is notoriously hard. One of the most important questions in this field is the celebrated $k$-set 
problem. Given a set of $n$ points in $\mathbb{R}^d$, a $k$-set is a subset of the given points which can be separated by
a $(d-1)$-dimensional hyperplane. The question is to find $f_d^k(n)$ which is the maximum number of such sets possible for an
arbitrary set of $n$ point in general position. This question remains to be elusive even in 2 dimensions, the status of which
is open for over forty years. Despite significant efforts put on this problem there is still a big gap between the upper and lower bounds on $f_2^k(n)$. In this paper we focus our study on $f_2^k(n)$. The special case of $f_2^{n/2}(n)$ is called the \emph{halving number problem}. From the perspective of finding upper bounds it is very easy to see that $f_2^k(n)$ is at most  $O(n^k)$. One can also observe that corresponding to every valid partition of the point set into $k$ and $n-k$ points there exists a unique line segment joining two points belonging to the set that divides the point sets into $k-1$ and $n-k-1$ points. This line segment is called a $k$-edge. The \emph{geometric graph} constructed as such is called the $k$-edge graph. This observation leads us to conclude that $f_2^k(n)$ is atmost the number of edges in this graph which can be atmost $O(n^2)$. For $k=n/2$, this graph is called the \emph{halving edge graph}. It is also well known that techniques to prove bounds for $f_2^{n/2}(n)$ can be translated to the general case of $f_2^{k}(n)$.\\

The first non trivial upper bounds for the halving case was obtained by 
Lov\'asz \cite{L71} and Erd\H{o}s, Lov\'asz, Simmons and Straus \cite{ELSS73} proved the upper bound of $O(n^{3/2})$
and a lower bound of $O(n \log n)$ in 1970's. The upper bound was later improved to $O(n^{3/2} \log^{*}n)$ by
Pach, Steiger and Szemer\'edi \cite{PSS92} in 1992. This was later improved to $O(n^{4/3})$ by Dey \cite{D98}. This bound was rederived by Andrejak, Aronov, Har-Peled, Seidel, Welzl \cite{AAHSW98} in a paper that also provided an identity connecting the number of halving edges and the number of \emph{crossings} in the halving edge graph in an arbitrary set of $n$ points in the plane. The lower bound was improved
by T\'oth to $ne^{\Omega(\sqrt{\log n})}$ \cite{T01}. This bound has been further refined recently by Nivasch to 
$\Omega(ne^{\sqrt{\ln 4}\sqrt{\ln n}}/ \sqrt{\ln n})$ \cite{N08}. From the lower bound perspective another line of approach has been 
considered by looking into $\gamma$-dense sets. These are sets in which the ratio of the maximum and minimum interpoint distance is atmost $\gamma \sqrt{n}$. For dense sets the upper bound can be improved to $O(n^{7/6})$ \cite{KT17}. For the $d$-dimensional case
the upper and lower bounds are known using complex techniques in combinatorics and algebraic geometry. It is known that $f_d^{n/2}(n) = O(n^{d-c_d})$ \cite{BFL90} where $c_d$ is a constant that goes to zero as $d$ increases. For lower bounds it is known that
given a construction of $\Omega(ng(n))$ halving edges in 2-dimension, one can extend this construction to 
$d$-dimensions, achieving a lower bound of $\Omega(n^{d-1}g(n))$ \cite{M02}. There are certain results which are obtained by looking
into special properties of the graph in three and four dimensions \cite{EVW97,S11,SST01,MSSW06}.

\section{Preliminaries}
In this paper we focus on the properties of the halving edge graph and the identity proven by Andrejak et al.
The \emph{halving graph} is graph defined by joining all the pair of points by segments whose corresponding lines partition
the point set into two equal parts i.e. consisting of $(n-2)/2$ points where $n$ is even. This
graph has several interesting properties most notably the \emph{Lov\'asz Local Lemma}. 

\subsection{Lov\'asz Local Lemma and Antipodality}
This result was proven by Lov\'asz in one of the early papers on this problem. Most of the upper
bound results on this problem use this lemma. The lemma basically says that if we consider any halving edge 
and rotate the line corresponding to that segment about one of its end points in both the directions (clockwise and anticlockwise) then the first point that this rotating line meets also form a halving edge when connected with
the point about which the rotation is performed. Although this result is a result that only 
reveals local properties of the point set, it is helpful in getting some global properties regarding the
halving edge graph. This technique was used by the proof of Dey to derive the bounds of $O(n^{4/3})$.

\subsection{Convex Chains and Dey's Original proof}
Dey in his original paper uses the notion of convex chains to derive the bound of $O(n^{4/3})$. The notion
of convex chains is based on using the Lov\'asz Lemma to derive an equivalence relation that results in the
formation of convex chains. In the paper, Dey proves that the number of convex chains constructed as such 
is atmost $O(n)$ and two convex chains cross at most twice with each other. Thus, creating at most $O(n^2)$ 
crossings which after using the Crossing Number Inequality again implies that the number of edges in the halving 
edge graph is atmost $O(n^{4/3})$.

\subsection{Main Identity}
In this section we describe the main identity that is proven by Andrejak et al. that
gives an exact relation between the number of crossings in the halving edge graph and 
the degree of nodes in the graph. This identity can be proven using doing an induction and
performing continuous motion on the set of points. 

\begin{eqnarray*}
\mbox{cr}(G) + \sum_{v} \binom{(d_v + 1)/2}{2} = \binom{n/2}{2}
\end{eqnarray*}

\section{Obtaining Dey's Bound using Cardano's Formula}
In this section, we prove some results regarding the halving edge graph which we use to get a different
analysis of the previous identity. Proofs of these are omitted as they are fairly easy to see. 

\begin{lemma}
Let $n_i$ be the number of nodes with degree $i$ in a halving edge graph $G$ and $d_v$ be the degree of a node $v$, then 
\begin{eqnarray*}
    \sum_v d_v^2 = \sum_i i^2 n_i
\end{eqnarray*}
\end{lemma}

\begin{lemma}
Let $n_i$ be the number of nodes with degree $i$ in a halving edge graph $G$ then $n_i$ is odd and $n_i \leq n_j \mbox{ } \forall i \geq j$. 
\end{lemma}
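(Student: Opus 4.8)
The plan is to reduce everything to the combinatorics of a single directed line rotating through the point set, which is the concrete form of the Lov\'asz Local Lemma (antipodality) described above. Fix a directed line $\ell$ in general position and record, for each rotation angle, the left-to-right order of the orthogonal projections of the $n$ points onto $\ell$. As $\ell$ rotates by $\pi$ this order is completely reversed, and the reversal is realized by a sequence of adjacent transpositions, exactly one for each unordered pair of points (the swap of $p$ and $q$ occurs when $\ell$ becomes perpendicular to the segment $pq$). A segment $pq$ is a halving edge precisely when its transposition occurs across the \emph{central gap}, between positions $n/2$ and $n/2+1$. The degree $d_p$ is therefore the number of central transpositions involving $p$ during a half-rotation.

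First I would settle the parity of the degrees, which forces $n_i=0$ for even $i$. A point starting in position $r$ ends, after the half-rotation, in position $n+1-r$; since exactly one of $r,\,n+1-r$ lies in $\{1,\dots,n/2\}$ and the other in $\{n/2+1,\dots,n\}$, the point must cross the central gap an odd number of times. Each such crossing is a halving edge at $p$, so $d_p$ is odd for every $p$, and the only content of the parity claim is for odd $i$.

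The heart of the lemma, and the step I expect to be the main obstacle, is showing that for each odd $i$ the count $n_i$ is odd. My intended approach is to build an involution on the set $V_i$ of degree-$i$ vertices having exactly one fixed point, so that $|V_i|$ is odd. The natural candidate pairs a vertex with its antipodal partner under the central symmetry of the circular sequence of directions (the map sending each direction to its opposite), and one would argue that this symmetry preserves the degree while admitting a unique self-paired vertex. The delicate point is that the central symmetry need not act freely on $V_i$: for highly symmetric configurations the number of fixed points can change (already for points in convex position, where every vertex has degree one, the parity accounting is subtle), so the argument requires either a genericity hypothesis or a more careful tracking of the central gap. Isolating the correct invariant here is the crux of the proof.

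Finally, for the monotonicity $n_i\le n_j$ when $i\ge j$, I would exhibit an injection from $V_i$ into $V_{i-2}$ and iterate. Using the convex-chain decomposition that comes from the same antipodality relation (as in Dey's proof), a vertex of large degree participates in many chains, and peeling one incident halving edge should yield a canonically associated vertex of strictly smaller degree; making this assignment injective across $V_i$ gives $n_i\le n_{i-2}$ and hence $n_i\le n_j$ for all $i\ge j$. I expect this half of the statement to be routine once the chain bookkeeping and the degree identity of Lemma~3.1 are in place, with essentially all the difficulty concentrated in the parity count above.
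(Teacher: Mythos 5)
The paper offers no proof of this lemma at all --- it is one of the statements whose ``proofs are omitted as they are fairly easy to see'' --- so your proposal can only be judged on its own merits. Your rotating-line setup and the deduction that every vertex has odd degree are correct and standard; that part establishes $n_i=0$ for even $i$. But the two claims that actually make up the lemma are left open in your write-up, and the first of them is false as stated, so the crux you postpone cannot be resolved. For $n$ points in convex position the halving edges are exactly the $n/2$ long diagonals (point $i$ joined to point $i+n/2$ in cyclic order), so every vertex has degree $1$ and $n_1=n$, which is even. This is precisely the configuration you flag as the ``delicate point'' of your involution: it is not a subtlety to be handled by a better invariant, it is a counterexample, and no involution on $V_1$ with a unique fixed point can exist there. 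The monotonicity claim is likewise false in the literal form ``$n_i\le n_j$ for all $i\ge j$'': taking $i$ odd with $n_i>0$ and $j=i-1$ even gives $n_i>0=n_j$, so at best the statement can be meant for odd $i\ge j$.

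Even for that odd-index version, your second half is not yet an argument. ``Peeling one incident halving edge'' is not a geometric operation: deleting an edge of $G$ does not produce the halving-edge graph of any point set, and you do not say which edge is peeled, which endpoint is the image, or why the resulting assignment $V_i\to V_{i-2}$ would be injective rather than many-to-one. So both halves remain unproved --- one because the chosen route runs into the convex-position example, the other because the map is never defined. It is worth noting that the only place the paper invokes this lemma is to justify $M\ge \alpha^2$ for the maximum degree $\alpha$, which needs nothing more than $n_\alpha\ge 1$; so if you want a defensible statement, the honest fix is to weaken the lemma to what is actually used rather than to try to prove it as written.
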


\begin{lemma}
Given an undirected graph $G$, if $d$ is the maximum degree of $G$ then $d \geq \frac{2m}{n}$. 
\end{lemma}

\subsection{Cardano's Formula}
Cardano's formula is a way to find the roots of a cubic polynomial. It was first published by Cardano in 1545. The formula says that given a cubic polynomial of the form
\begin{eqnarray*}
    ax^3 + bx^2 + cx + d = 0 
\end{eqnarray*}
The solution is given as
\begin{eqnarray*}
    x &=& \sqrt[3]{\left(\frac{-b^3}{27a^3} + \frac{bc}{6a^2} - \frac{d}{2a} \right) + \sqrt{\left( \frac{-b^3}{27a^3} + \frac{bc}{6a^2} - \frac{d}{2a} \right)^2 + \left( \frac{c}{3a} - \frac{b^2}{9a^2}\right)^3  }} \\
    &+&  \sqrt[3]{\left(\frac{-b^3}{27a^3} + \frac{bc}{6a^2} - \frac{d}{2a} \right) - \sqrt{\left( \frac{-b^3}{27a^3} + \frac{bc}{6a^2} - \frac{d}{2a} \right)^2 + \left( \frac{c}{3a} - \frac{b^2}{9a^2}\right)^3  }} - \frac{b}{3a}
\end{eqnarray*}
This formula can be simplified as 
\begin{eqnarray*}
x  &=& \{q + [q^2 + (r - p^2)^3]^{1/2}\}^{1/3} + {\{q - [q^2 + (r - p^2)^3]^{1/2}\}^{1/3}} + p
\end{eqnarray*}
where
\begin{eqnarray*}
p = \frac{-b}{3a}, q = p^3 + \frac{(bc-3ad)}{6a^2}, r = \frac{c}{3a}
\end{eqnarray*}

\begin{theorem}
The number of halving edges for an arbirary set of $n$ points is atmost $O(n^{4/3})$.
\end{theorem}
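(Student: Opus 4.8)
The plan is to convert the exact identity of Andrejak et al.\ into a single cubic inequality in the number of edges $m$ and then extract the bound by locating the largest real root of that cubic via Cardano's formula. First I would rewrite the identity so that it isolates the second moment of the degree sequence. Since every halving edge contributes to exactly two vertex degrees we have $m=\frac12\sum_v d_v$, and a direct simplification gives
\begin{eqnarray*}
\binom{(d_v+1)/2}{2}=\frac{(d_v+1)(d_v-1)}{8}=\frac{d_v^2-1}{8}.
\end{eqnarray*}
Summing over the $n$ vertices and using $\binom{n/2}{2}=\frac{n(n-2)}{8}$, the identity collapses to
\begin{eqnarray*}
\mathrm{cr}(G)=\frac18\left(n^2-n-\sum_v d_v^2\right).
\end{eqnarray*}
This is exact; the naive analysis of Andrejak et al.\ merely discards the non-negative term $\sum_v d_v^2$ to obtain $\mathrm{cr}(G)\le\binom{n/2}{2}=O(n^2)$ before applying the Crossing Number Inequality. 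The entire point of the tightening is to keep $\sum_v d_v^2$ in play.

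Next I would bound the two sides of this expression in opposite directions. For a lower bound on the left I would invoke the Crossing Number Inequality: assuming $m\ge 4n$ (otherwise $m=O(n)$ and there is nothing to prove), $\mathrm{cr}(G)\ge\frac{m^3}{64n^2}$. For an upper bound on the right I need a \emph{lower} bound on the second moment $\sum_v d_v^2=\sum_i i^2 n_i$, which is where Lemma~3.1 enters. By convexity of $x\mapsto x^2$ (Cauchy--Schwarz) applied against the constraints $\sum_i n_i=n$ and $\sum_i i\,n_i=2m$ one gets $\sum_i i^2 n_i\ge\frac{4m^2}{n}$, and the monotonicity $n_i\le n_j$ for $i\ge j$ from Lemma~3.2 can be exploited to raise this lower bound further and sharpen the constant. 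Substituting yields $\mathrm{cr}(G)\le\frac18\!\left(n^2-n-\frac{4m^2}{n}\right)$.

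Combining the two inequalities and clearing denominators produces a cubic inequality in $m$ of the shape
\begin{eqnarray*}
m^3+32\,n\,m^2+8n^3-8n^4\le 0.
\end{eqnarray*}
Reading this as $am^3+bm^2+cm+d\le 0$ with $a=1$, $b=32n$, $c=0$, $d=8n^3-8n^4$, I would apply Cardano's formula exactly as stated above to compute its unique positive root $m^{\ast}$; every admissible $m$ must then satisfy $m\le m^{\ast}$. The final step is an asymptotic reading of the Cardano expression: balancing the $m^3$ term against the constant $8n^4$ gives $m^{\ast}=(8n^4)^{1/3}(1+o(1))=2n^{4/3}(1+o(1))$, with the $32\,n\,m^2$ term contributing only a lower-order correction that slightly improves the constant. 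I expect the main obstacle to be the book-keeping in the tightening step, namely extracting the sharpest \emph{honest} lower bound on $\sum_i i^2 n_i$ from Lemmas~3.1 and~3.2 (an over-aggressive bound could fail for the actual degree sequence), and then confirming that the Cardano root is genuinely $O(n^{4/3})$ rather than being discarded by a crude over-estimate of the cubic.
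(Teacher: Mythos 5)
Your proposal is correct and follows the same overall skeleton as the paper's proof: rewrite the Andrejak et al.\ identity to isolate $\sum_v d_v^2$, lower-bound that second moment in terms of $m$, invoke the Crossing Number Inequality, and read off the positive root of the resulting cubic in $m$ via Cardano. It differs, however, in the one step that carries the quantitative content, namely the lower bound on $M=\sum_i i^2 n_i$. The paper keeps only the contribution of the single maximum-degree vertex and uses Lemma~3.3 to get $M\ge \alpha^2\ge 4m^2/n^2$, whereas you apply Cauchy--Schwarz against the constraints $\sum_i n_i=n$ and $\sum_i i\,n_i=2m$ to get $M\ge 4m^2/n$, which is stronger by a factor of $n$ and is exactly the kind of sharpening the paper itself defers to future work in its discussion of $M$. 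Since the dominant balance in the cubic is between the $m^3$ term and the $n^4$ term in either case, both routes land on $m=O(n^{4/3})$, and the sharper bound only improves the lower-order $m^2$ coefficient. Your version also has the correct arithmetic where the paper slips: $\binom{(d_v+1)/2}{2}=(d_v^2-1)/8$ and $\binom{n/2}{2}=(n^2-2n)/8$, where the paper writes $(d^2-1)/2$ and $n^2/8-n/8$; and your cubic $m^3+32nm^2+8n^3-8n^4\le 0$ carries the correct negative sign on the constant term, whereas the paper's displayed inequality $\frac{m^3}{64n^2}+\frac{2m^2}{n^2}+\frac{n^2-3n}{8}\le 0$ has all three terms positive and cannot hold as written. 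Your explicit remarks that the cubic is increasing for $m>0$ (so every admissible $m$ is at most the unique positive root) and that the Crossing Number Inequality requires $m\ge 4n$ close gaps the paper leaves open.
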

\begin{proof}
We prove this using a different analysis of the identity proven by Andrejak et al.
\begin{eqnarray*}
\mbox{cr}(G) + \sum_{v} \binom{(d_v + 1)/2}{2} = \binom{n/2}{2}
\end{eqnarray*}
In the paper by Andrejak et al, the second term in the identity which is a positive quantity
is ignored to be 0, there by implying the 
\begin{eqnarray*}
\mbox{cr}(G) \leq n^2/8 - n/8
\end{eqnarray*}
from which we can conclude that $\mbox{cr}(G) = O(n^{4/3})$ meeting the bound obtained by Dey.
In this proof we do not ignore the second and write the identity as an inequality  with variable $m$ where $m$ is the number of edges

\begin{eqnarray*}
\mbox{cr}(G) &+& \sum_v \left(\frac{d^2-1}{2}\right) = \frac{n^2}{8} - \frac{n}{8} \\
\mbox{cr}(G) &=& \frac{n^2}{8} - \frac{3n}{8} - \sum_v \frac{d^2}{2}  \\
\mbox{cr}(G) &=& \frac{n^2}{8} - \frac{3n}{8} - \sum_i \frac{i^2 n_i}{2}
\end{eqnarray*}
\end{proof}
Let $M =  \sum_i i^2 n_i$. Because of Lemma 4.2, $M \geq \sum_i^{\alpha} i^2 n_{\alpha} \geq \alpha^2 \geq 4m^2/n^2$ where $\alpha$ is the maximum degree of the halving edge graphs. Thus applying this to the identity we get

\begin{eqnarray*}
\mbox{cr}(G) &=& \frac{n^2}{8} - \frac{3n}{8} - \sum_i \frac{i^2 n_i}{2} \\
\mbox{cr}(G) &\leq& \frac{n^2-3n}{8} - \alpha^2/2 \\
&\leq & \frac{n^2-3n}{8} - \frac{2m^2}{n^2}
\end{eqnarray*}
Thus using the fact that cr($G$) $\geq \frac{m^3}{64n^2}$, we conclude that

\begin{eqnarray*}
\frac{m^3}{64n^2} \leq  \frac{n^2}{8} - \frac{3n}{8} - \frac{2m^2}{n^2} \implies
\frac{m^3}{64n^2} + \frac{2m^2}{n^2} + \left(\frac{n^2-3n}{8}\right) \leq 0 
\end{eqnarray*}

Thus we get a cubic inequality $P(m) \leq 0$ with $m$ as the variable and the cubic polynomial $P(m)$ as
$\frac{m^3}{64n^2} + \frac{2m^2}{n^2} + \left(\frac{n^2-3n}{8}\right)$. Now if we find the roots of this polynomial using
the Cardano's formula then we get a solution for the value of $m$ which is the number of halving edges in the 
halving graph. Applying the Cardano's formula we get $p = \frac{-b}{3a} = -\frac{128}{3}$, $q = p^3 - \frac{d}{2a} = -\frac{8\cdot(64)^3}{27} - \frac{64n^4 - 3n^3}{16}$ and $r = 0$. Thus we get the root of the polynomial $P(m)$ as

\begin{eqnarray*}
m &=& \Bigg\{-\frac{8\cdot(64)^3}{27} - \frac{64n^4 - 3n^3}{16} + \Bigg[\left(-\frac{8\cdot(64)^3}{27} - \frac{64n^4 - 3n^3}{16}\right)^2 - \frac{128^6}{3^6} \Bigg]^{1/2}\Bigg\}^{1/3} + \\
& & \Bigg\{-\frac{8\cdot(64)^3}{27} - \frac{64n^4 - 3n^3}{16} - \Bigg[\left(-\frac{8\cdot(64)^3}{27} - \frac{64n^4 - 3n^3}{16}\right)^2 + \frac{128^6}{3^6}\Bigg]^{1/2}\Bigg\}^{1/3} - \frac{128}{3} 
\end{eqnarray*}

Therefore, if we take the asymptotically dominating terms in the expression for $m$, we get $m = O(n^{4/3})$, reobtaining the upper bound of Dey.

\section{Discussion on the Bound of $M$}
In this section we discuss the implications of the bound of $M$ in the analysis of the above
identity. According to our analysis we have used a very crude lower bound on the value of $M$.
in the following we discuss how one can achieve better upper bounds for the number of halving edges using better lower bounds for $M$.
\begin{itemize}
    \item[(a)] From lemma 3.2 we observe that $i$ can only be an odd number. Thus a better lower 
    bound for $M$ could be obtained if we get a better lower bound using $\sum_{i = \mbox{odd}} i^2$ although
    we can't use the direct formula of $\sum_{i = \mbox{odd}} i^2$ because some of the terms in this series
    might be missing. A better bound on the number of missing terms i.e. number of $i$ such that $i$ is odd,
    can be used to get a better lower bound on $M$.
    
    \item[(b)] A better bound for $M$ can be obtained if we can get a reasonable estimate on $n_i$. In the
    current bound we have used a very naive lower bound for $n_i$ as 1 whereas an estimate for $n_i$ can
    give us a better bound.
\end{itemize}

\section{Conclusion and Future Work} 
in this paper we have provided a new analysis of bounding the number of halving edges in 2-Dimensions, this problems although we old not much understanding of of the exact bound of halving edges is present. We present a proof that rederived an old upper bound of $O(n^{4/3})$ our proof is based on a different analysis of the identity proven by Andrejak et al. our analysis is interesting because it does not neglect the term in the identity and we use a result of Cardeno's to obtained the aforementioned upper bound. This technique also allows us to apply more rigorous bounds on the neglected term, which may lead to more complex polynomials equations whose roots might be used. As part of future work we would like to investigate into deriving more stringent bounds on that term.

\end{document}